%% file: main.tex
\documentclass[11pt]{article}

\usepackage[margin=1in]{geometry} 
\input{packages}
\input{macros}
\input{macros-syntax-coloring}

\title{A Compositional Language for Property Graphs}

\author{
Marcelo Arenas$^{1,2}$ \and
Leonid Libkin$^{1,3,4}$ \and
Wim Martens$^{1,5}$
}

\date{} 

\newcommand{\OMIT}[1]{}
\usepackage{amsthm}

\theoremstyle{plain}
\newtheorem{theorem}{Theorem}
\newtheorem{proposition}[theorem]{Proposition}

\theoremstyle{definition}
\newtheorem{definition}[theorem]{Definition}
\newtheorem{example}[theorem]{Example}

\theoremstyle{remark}
\newtheorem{remark}[theorem]{Remark}

\begin{document}

\maketitle

\begin{center}
{\small
$^1$ RelationalAI \\
$^2$ Pontificia Universidad Cat\'olica de Chile \\
$^3$ University of Edinburgh \\
$^4$ IRIF, Universit\'e Paris-Cit\'e \\
$^5$ University of Bayreuth
}
\end{center}

\begin{abstract}
\input{abstract}
\end{abstract}

\noindent\textbf{Keywords:}
Query languages, Property graphs, Datalog, Compositionality

\parindent=0cm
\parskip=0.27cm

\section{Introduction}
\label{sec-intro}
\input{introduction}

\section{Preliminaries}
\label{sec-prelim}
\input{preliminaries}

\section{Regular Path Queries with Variables}\label{sec:rpqvs}

\input{rpqv}

\section{The Query Language \hashd}
\label{sec-hash-datalog}

\input{hash-datalog}

\section{A Syntax Proposal for GQL and SQL}

\input{gql}

\section{Conclusion}
With RPQVs and $\hashd$ we have designed two independent mechanisms that, if adopted by the GQL and SQL/PGQ standards, will fill their known expressivity gaps. 
Either one separately solves the gap known as the ``increasing values on edges'' query. This is not a randomly chosen query: it is of such importance to the standards committee 
that an entirely new mechanism of post-processing paths with a sliding window was proposed to express it~\cite{tobias,fred}.


RPQVs and $\hashd$, however, both show how the problem can be addressed using mechanisms that are close to those that already exist in the standards. RPQVs can express the query by adopting a fully symmetric treatment of nodes and edges in the design of path pattern expressions. The rule-based system of $\hashd$ can express it since it can transform a graph to its dual, on which the existing GQL mechanism for path matching can express the query.

Combining RPQVs and $\hashd$ allows for complete compositionality: a free flow of information back and forth between graph querying and relational processing in GQL. This is in particular manifested by the capture of all NLOGSPACE queries, which is currently only possible with a significant complexity cost. Our concrete proposal to ISO includes several fallback options to help our main ideas get across and improve this situation for the standards.


Since property graphs are now a major representation model for knowledge graphs, closing the compositionality gaps in GQL and SQL/PGQ is also a step toward more principled, expressive, and interoperable graph querying for the Semantic Web.





\bibliographystyle{plain}
\bibliography{references}


\end{document}

%% file: packages.tex
\usepackage[T1]{fontenc}
\usepackage{graphicx}
\usepackage{url}
\usepackage{xcolor}
\usepackage{todonotes}
\usepackage{amsmath}
\usepackage{listings}
\usepackage{tikz}
\usepackage{latexsym,stmaryrd}
\usepackage{amsfonts}
\usepackage{xspace}

\usetikzlibrary{automata}
\usetikzlibrary{chains}
\usetikzlibrary{arrows}
\usetikzlibrary{calc}
\usetikzlibrary{arrows,positioning,backgrounds} 
\usetikzlibrary{fit}

\tikzset{
    rt/.style={
		rectangle,
		fill = white,
		draw=black, 
		text centered,
		inner sep=0.5ex
		},
    rtt/.style={ 
    	rt,
    	inner sep=0.1ex
    	},
    ert/.style={ 
     	rt,
     	dashed
     	}, 
    ertt/.style={ 
        rtt,
        dashed
        }, 
    rect/.style={ 
        rectangle,
        fill = white,
        rounded corners,
        draw=black, 
        text centered,
        inner sep=0.8ex
        },
    rectw/.style={
        rect,
        draw=white
        },
    erect/.style={ 
    	rect,
    	dashed
    	},
    erectw/.style={ 
     	rectw,
     	dashed
     	},
    arrout/.style={
           ->,
           -latex,
           },
    arrin/.style={
           <-,
           latex-,
           },
    arrb/.style={
           <->,
           >=latex,
           }
}

%% file: macros.tex
\newcommand{\mcomment}[2]{}
\newcommand{\footcomment}[2]{}
\newcommand{\margincomment}[2]{}

\renewcommand{\mcomment}[2]{\ifmmode\margincomment{#1}{#2}\else\footcomment{#1}{#2}\fi}
\renewcommand{\footcomment}[2]{{\color{blue}\textbf{(#1)}}\footnote{\textbf{#1:} #2}}
\renewcommand{\margincomment}[2]{{\color{blue}\textbf{(#1)}}\footnotemark\marginnote{\tiny\textsuperscript{\thefootnote}\textbf{#1:} #2}}

    \newcommand{\wim}[1]{\todo[inline,color=green!30]{Wim: #1}}

\newcommand{\Label}{\mathit{Lab}}
\newcommand{\Value}{\mathit{Val}}
\newcommand{\Key}{\mathit{Key}}
\newcommand{\src}{\mathit{src}}
\newcommand{\tgt}{\mathit{tgt}}
\newcommand{\lab}{\mathit{lab}}
\newcommand{\prop}{\mathit{prop}}

\newcommand{\Fin}{\mathit{Fin}}
\newcommand{\hashd}{\#\textsc{Datalog}}

\newcommand{\Var}{\mathit{Var}}

\newcommand{\cond}{\mathit{cond}}
\newcommand{\regex}{\mathit{rgx}}
\newcommand{\expr}{\mathit{expr}}
\newcommand{\node}{\mathit{node}}
\newcommand{\edge}{\mathit{edge}}
\newcommand{\concat}{\mathit{concat}}
\newcommand{\el}{\mathit{el}}

\newcommand{\Paths}{\mathit{Paths}}
\newcommand{\List}{\mathit{List}}

\newcommand{\last}{\mathit{last}}
\newcommand{\first}{\mathit{first}}

\newcommand{\sem}[1]{\llbracket{#1}\rrbracket}

\newcommand{\dom}{\textit{dom}}

\newlength\boxwidth
\newlength\questionwidth

\newcommand{\gqlcode}[1]{\texttt{#1}}

%% file: macros-syntax-coloring.tex
\usepackage{color}
\definecolor{gray}{rgb}{0.4,0.4,0.4}
\definecolor{darkblue}{rgb}{0.0,0.0,0.6}
\definecolor{darkred}{rgb}{0.45,0,0}
\definecolor{darkgreen}{rgb}{0,0.30,0.20}
\definecolor{darkpurple}{RGB}{120, 0, 180}
\colorlet{keywordcolor}{darkblue}
\newcommand{\kwfont}{\color{keywordcolor}\bfseries}

\colorlet{labelcolor}{darkgreen}
\newcommand{\lblfont}{\color{labelcolor}}

\colorlet{keycolor}{darkred}
\newcommand{\keyfont}{\color{keycolor}}

\colorlet{structcolor}{black}
\newcommand{\structfont}{\color{structcolor}\bfseries}

\lstdefinelanguage{vmgql}
{
  columns=fullflexible,
  otherkeywords=,
  keywordstyle=,
  keywords=[1]{},
  keywordstyle=[1]\color{darkpurple},
  keywords=[2]{ACYCLIC, SIMPLE, TRAIL, WALK, ALL, ANY, SHORTEST, MATCH, PROJ, USE, FILTER, FOR, KEEP, WHERE, OPTIONAL, RETURN, GROUP, GROUPS, LET, INSERT, REMOVE, SET, UNION, INTERSECT, OTHERWISE, EXCEPT, IS, NOT, AND, THEN, OR, YIELD, DISTINCT, AS, DIFFERENT, EDGES, PATHS, CALL,CREATE,NODE,EDGE,FROM,ROW,COLLAPSE},
  keywordstyle=[2]\kwfont,
  keywords=[3]{},
  keywordstyle=[3]\structfont\color{gray},
  keywords=[5]{YachtClub, Person, Account},
  keywordstyle=[5]\lblfont,
  keywords=[6]{Transfer, Owns, Member},
  keywordstyle=[6]\lblfont,
  keywords=[7]{address, fullname, name, id, genre, owner, isBlocked, amount},
  keywordstyle=[7]\keyfont,
  string=[m]{"},
  stringstyle=,
  inputencoding=utf8,
  basicstyle=\small\ttfamily,  
  keepspaces=true,
  showstringspaces=false,
  upquote=true,
  aboveskip=3pt,
  belowskip=3pt,
  numberstyle=\scriptsize,
  backgroundcolor=\color{white},
  numbersep=8pt,
  mathescape,
  comment=[l]{//}
}

\newcommand{\forcedhfill}{\hspace*{0pt plus 1fill}}
\makeatletter
\newcommand{\thequery}{(\arabic{equation})}
\newcounter{nextquery}

\newlength{\lstskip}
\newcommand{\vm@cgql@start}[2][]{%
  \leavevmode\unskip\pagebreak[1]\vspace{\lstskip}\par\noindent%
  \hspace*{\parindent}
  #1%
  \lst@boxtrue%
  \lstset{language=vmgql,boxpos=b,resetmargins=true,mathescape=true,#2}
  }
\newcommand{\vm@cgql@end}[1][]{\forcedhfill#1\par\addvspace{\lstskip}}
\lstnewenvironment{gql}[1][]%
  {\vm@cgql@start[\refstepcounter{equation}]{#1}}%
  {\vm@cgql@end[~\raisebox{1ex}{\thequery}]}
\lstnewenvironment{gql*}[1][]%
  {\vm@cgql@start{#1}}
  {\vm@cgql@end}
\makeatother

\newcommand{\kw}[1]{\mathgql{\kwfont{#1}}\xspace}
\newcommand{\mathgql}[1]{\mbox{\ttfamily#1}}

\DeclareRobustCommand*{\gqlinline}{%
  \ifmmode
    \let\SavedBGroup\bgroup
    \def\bgroup{%
      \let\bgroup\SavedBGroup
      \hbox\bgroup
    }%
  \fi
  \lstinline[language=vmgql]
}

%% file: abstract.tex
A major shortcoming of the recently standardized graph query languages
GQL and SQL/PGQ is their lack of compositionality. Given the
importance of these languages in querying knowledge graphs, we address
this shortcoming and propose both theoretical solutions and a path to
adding them to the new standards.  The highlight of the
non-compositionality problem is that while both GQL and SQL/PGQ can
express graph reachability and all first-order queries, they fall
short of the problems in NLOGSPACE.  In view of the completeness of
reachability for NLOGSPACE under first-order reductions, this is
extremely counterintuitive.  The issue is well recognized by the
standards committee that has been searching for language extensions
to fill the gaps at the level of some specific inexpressible
queries.

We address the issue in a systematic way and propose a language that
fills expressivity gaps by allowing full compositionality between
graph patterns and relational queries. It does so by using two key
components: a cleaned up definition of
regular path queries with variables and data value comparisons,
and a fully compositional graph-to-graph language $\hashd$ with
complete support for constructing new graph elements from nodes,
edges, lists of nodes and edges, and even entire paths.
We show that the resulting language addresses the issues facing the
standards committee, and propose a concrete addition to GQL and SQL/PGQ that
incorporates its main features.

\OMIT{Lack of compositionality is a major shortcoming of current query languages for property graphs such as GQL and SQL/PGQ. Even though these languages can express graph reachability all first-order queries, they cannot express all problems in NLOGSPACE. This is puzzling because graph reachability is NLOGSPACE-complete under first-order reductions. The issue is so deep that the ISO committee is searching for language extensions that will allow expressing important but currently inexpressible queries such as returning all paths whose values increase along edges.

We propose a compositional language that fixes this issue. Building on a cleaned up definition of regular path queries with variables and data value comparisons, $\hashd$ is a fully compositional graph-to-graph language with complete support for constructing new graph elements from nodes, edges, lists of nodes and edges, and even entire paths. \wim{It may be the first such language?}
We prove that $\hashd$ can express all queries in NLOGSPACE and provide a proposal for incorporating its main principles in the GQL and SQL/PGQ standards.

}

%% file: introduction.tex
 Knowledge graphs are an essential tool used by most major enterprises for organizing, integrating, and reasoning over complex data. Their widespread use led to a robust market that is expected to grow tenfold in the next decade. Most knowledge graphs utilise one of the basic underlying models: \emph{RDF graphs} or {\em labeled  property graphs (LPG)}. Several recent market analyses portray the split between these two models as about equal, or perhaps with a slight 60-40 edge to LPGs.\footnote{ \url{www.fortunebusinessinsights.com/knowledge-graph-market-112139} and 
\url{www.marketreportsworld.com/market-reports/graph-database-market-14723746}}
 The decision which model to use hinges upon a particular application area. If fast traversals and analytics are needed (common in, e.g., finance, supply chain, logistics, and cyber security applications), then LPGs are usually chosen. On the other hand, if  reasoning and ontologies become crucial (common in healthcare, government, and linked data), the RDF model is often preferred.

\paragraph{The State of Query Languages for LPGs.} For the RDF model, we have the SPARQL query language, standardized in 2008. Its definition has been through several revisions, is well accepted by industrial stakeholders, and has been subject of a significant research effort on understanding its semantics, expressiveness, and various extensions
\cite{PAG06,P07,sparql,w3c-sparql11-2010,nre,yottabyte,HP16,KaminskiKG17,BarceloKPS-tods18}. The academic community played a crucial role in its design, pointing out design flaws \cite{PAG06,yottabyte,LosemannM-tods13} that have been fixed in the language standard. 

When it comes to LPGs, the situation is quite different. LPG query languages were first proposed around a decade and a half ago, with Cypher \cite{cypher} paving the way, and PGQL \cite{pgql} and GSQL \cite{tigergraph} following. As these were vendor-specific,  the graph database community decided to create a new standard language, similar to how SQL unified the relational database industry in the 1980s. The task was delegated to the same committee that is responsible for SQL. This committee created two languages: an extension of SQL for LPG querying called \emph{SQL/PGQ}; and a standalone language for LPGs called \emph{GQL}. The languages have very significant overlap: their pattern matching facility, the workhorse of every graph query language, is identical \cite{sigmod22}, and their relational operators have the same expressiveness \cite{vldb25}.  Thus, our remarks about GQL directly apply to SQL/PGQ.

For GQL and SQL/PGQ, we are in a similar position than SPARQL 15 years ago. Although their design is strongly inspired by Cypher and 
input from the research community~\cite{gcore},  final design decisions were taken by a committee at a time when the foundational understanding of graph query languages was nowhere near as strong as it was for relational databases in the early days of SQL design.  
As a result, languages for LPGs are only now becoming a subject of serious academic investigation, having been transcribed from the rather idiosyncratic language of standards for the research community \cite{gpc,icdt23,vldb25}.



\paragraph{Major Shortcoming of GQL and SQL/PGQ: Non-Compositionality.} This recent academic investigation uncovered several significant shortcomings of the new languages, the most important of them broadly described as \emph{non-compositionality}. 
The {\em main goal} of this paper is to \emph{address this fundamental limitation}. We first develop a fully compositional theoretical language that captures the essence of GQL and avoids this design flaw. Second, we propose a concrete extension to GQL that addresses these limitations. The language is based on two key components: an extension of regular path queries with variables and flexible path concatenation, and an extension of Datalog with value invention. 

Next, we explain how non-compositionality manifests itself, using an example from \cite{vldb25,pods25}. It concerns two very similar queries, both commonly occurring in practice, and yet one of them not definable in GQL. Recall that in the LPG model, we deal with graphs whose nodes and edges can have labels (or types), and multiple properties attached to them. For example, in a bank database, we can have account nodes and transfer edges. \emph{Account} and \emph{Transfer} are labels; account nodes have properties such as owner and balance, while transfer edges have have properties such as amount and timestamp (ts). 

\resizebox{.9\linewidth}{!}{%
    \begin{tikzpicture}
        \tikzstyle{every state}=[draw,thick,rectangle,rounded corners,fill=white!85!black,minimum size=5mm, text=black, font=\ttfamily, inner sep=0pt]
        \tikzstyle{every node}=[font=\ttfamily]

        \node[state] (acc1) at (-6,0) {
          \begin{tabular}{l}
            \small {\bf owner:} Megan\\
            \small {\bf balance:} 100\\
          \end{tabular}
        };

        \node[state] (acc2) at (0,0) {
          \begin{tabular}{l}
            \small {\bf owner:} Jay\\
            \small {\bf balance:} 150\\
          \end{tabular}
        };

        \node[state] (acc3) at (6,0) {
          \begin{tabular}{l}
            \small {\bf owner:} Robben\\
            \small {\bf balance:} 160\\
          \end{tabular}
        };

        \node[state] (acc4) at (6,-2.5) {
          \begin{tabular}{l}
            \small {\bf owner:} Rebecca\\
            \small {\bf balance:} 200\\
          \end{tabular}
        };

        \node[state] (acc5) at (0,-2.5) {
          \begin{tabular}{l}
            \small {\bf owner:} Scott\\
            \small {\bf balance:} 150\\
          \end{tabular}
        };

        \node[state] (acc6) at (-6,-2.5) {
          \begin{tabular}{l}
            \small {\bf owner:} Mike\\
            \small {\bf balance:} 180\\
          \end{tabular}
        };

        \node[rectangle,draw,fill=white] (acc1label) at ($(acc1)+(0.5,0.6)$) {\ttfamily \small Account};
        \node[rectangle,draw,fill=white] (acc2label) at ($(acc2)+(0.5,0.6)$) {\ttfamily \small Account};
        \node[rectangle,draw,fill=white] (acc3label) at ($(acc3)+(0.6,0.6)$) {\ttfamily \small Account};
        \node[rectangle,draw,fill=white] (acc4label) at ($(acc4)+(1.2,0.6)$) {\ttfamily \small Account};
        \node[rectangle,draw,fill=white] (acc5label) at ($(acc5)+(0.0,0.6)$) {\ttfamily \small Account};
        \node[rectangle,draw,fill=white] (acc6label) at ($(acc6)+(0.0,0.6)$) {\ttfamily \small Account};
      

        \node[state,dashed,fill=white!95!black] (t1) at (-3,0) {
            \begin{tabular}{l}
                \small {\bf ts:} 20260101\\
                \small {\bf amount:} 10\\
            \end{tabular}
        };
        \node[state,dashed,fill=white!95!black] (t2) at (3,0) {
            \begin{tabular}{l}
                \small {\bf ts:} 20260102\\
                \small {\bf amount:} 10\\
            \end{tabular}
        };
        \node[state,dashed,fill=white!95!black] (t3) at (7.5,-1.15) {
            \begin{tabular}{l}
                \small {\bf ts:} 20260103\\
                \small {\bf amount:} 10\\
            \end{tabular}
        };
        \node[state,dashed,fill=white!95!black] (t4) at (3,-2.5) {
            \begin{tabular}{l}
                \small {\bf ts:} 20260104\\
                \small {\bf amount:} 10\\
            \end{tabular}
        };
        \node[state,dashed,fill=white!95!black] (t5) at (-3,-2.5) {
            \begin{tabular}{l}
                \small {\bf ts:} 20260105\\
                \small {\bf amount:} 10\\
            \end{tabular}
        };

        \node[rectangle,draw,dashed,fill=white] (t1label) at ($(t1)+(0.5,0.6)$) {\ttfamily \small Transfer};
        \node[rectangle,draw,dashed,fill=white] (t2label) at ($(t2)+(0.5,0.6)$) {\ttfamily \small Transfer};
        \node[rectangle,draw,dashed,fill=white] (t3label) at ($(t3)+(0.5,0.6)$) {\ttfamily \small Transfer};
        \node[rectangle,draw,dashed,fill=white] (t4label) at ($(t4)+(0.5,0.6)$) {\ttfamily \small Transfer};
        \node[rectangle,draw,dashed,fill=white] (t5label) at ($(t5)+(0.5,0.6)$) {\ttfamily \small Transfer};


    \begin{pgfonlayer}{background}        
        \path[-latex,thick]
        (acc1) edge (acc2)
        (acc2) edge (acc3)
        (acc3) edge[bend left] (acc4)
        (acc4) edge (acc5)
        (acc5) edge (acc6)
        ;
          
    \end{pgfonlayer}

    



    \end{tikzpicture}
    }

\noindent With this example, consider the query: \emph{``Return two accounts such that there is a transfer chain between them in which account balances increase.''} GQL makes this easy. For simplicity in queries below we omit labels, assuming our graph only has account nodes and transfer edges: 
\begin{gql*}
MATCH (x) ((a1) -> (a2) WHERE a1.balance < a2.balance)+ (y) RETURN x,y
\end{gql*}
making use of the Kleene plus (\texttt{+}) to repeat the subpattern one or more times. 

Next consider a very similar query that might be used in fraud analysis:
\emph{``Return two accounts such that there is a transfer chain between them in which timestamps of transfers increase.''} A natural attempt to write it as
\begin{gql*}
MATCH (x) ( -[t1]-> -[t2]-> WHERE t1.ts < t2.ts)+ (y) RETURN x,y
\end{gql*}
does not work: if we have a path with 
consecutive timestamps 11, 12, 1, 2, it will be matched. 
One problem is deeply rooted in the semantics of GQL, particularly path concatenation. In a repetition like Kleene star or plus, paths are concatenated by \emph{merging their end-nodes}. The last pattern is equivalent to 
\begin{gql*}
MATCH (x) ( () -[t1]-> -[t2]-> () WHERE t1.ts < t2.ts)+ (y)
\end{gql*}
with two anonymous nodes inserted, and thus in the new iteration of a repetition, information about the edges of the previous iteration is lost. 

But this example reveals something more fundamental about GQL. 
Both patterns we use 
are variations of reachability queries with some simple additional checks. Since reachability is complete for the class NLOGSPACE, these queries are also in NLOGSPACE, yet only one of them is expressible in GQL. 
This is very puzzling, and to understand why we need to take a birdeye's view of a single transaction in GQL and SQL/PGQ. In these languages:
\begin{enumerate}
    \item \verb+MATCH+ operates on a graph and produces a relation whose columns are variables mentioned in the pattern;
\item Subsequent operations modify this table in a language that has at least the power of relational algebra/first-order logic;
\item In GQL, it is possible that another \verb+MATCH+ operation occurs; it is performed on the original graph, and its result is joined with the current table.
\end{enumerate}
The key differences between GQL and SQL/PGQ are in step 2, where relational languages differ in their presentation (GQL follows Cypher's iterative style; SQL/PGQ uses SQL). 
With this in mind, observe the following:
\begin{itemize}
    \item Every pattern language expresses the graph reachability problem;
    \item GQL and SQL/PGQ can express all first-order queries; and
    \item graph reachability is NLOGSPACE-complete under first-order reductions. 
\end{itemize}
From this, one would naturally conclude that 
GQL should be able to express all properties in NLOGSPACE. However, we have just seen an NLOGSPACE property not expressible in GQL, and in fact there are even DLOGSPACE  properties that basic GQL and SQL/PGQ cannot express \cite{vldb25}. 

To explain why this is the case, we revisit \emph{compositionality}. The completeness of reachability states that we can take any NLOGSPACE problem, turn it into a graph by first-order transformations, and then answer it by computing the reachability relation on that graph. This requires information to flow \emph{back and forth} between the relational and graph components of the language, and our bird's eye view of GQL and SQL/PGQ shows that there is no such facility.

To achieve compositionality, languages such as GQL would need to construct graphs from the results of relational operators and then run pattern matching on those graphs. This feedback loop --- from relations to graphs --- is  currently missing
and leads to limitations in expressiveness in GQL and SQL/PGQ.

Of course full-featured languages have workarounds (SQL is Turing-complete after all), but they carry an enormous complexity price as they require building exponentially many paths and using exponential time algorithms to solve tractable problems, resulting in extremely poor performance \cite{vldb25,sigmod25}.

\paragraph{Our Contributions.}


Our goal is to offer {\em language design guidance} that closes the expressivity gap without increasing complexity. We show that this can be done in a simple and elegant way. The key ideas behind our approach are a clean model of path queries that significantly simplifies the typing of variables compared to GQL \cite[Fig.~2]{gpc}, 
and an extension of the relational querying component of graph languages that enables the results of relational operations to modify the graph.
This can be done by Datalog-style rules with Skolem functions in the head.
With these additions, we can express previously inexpressible queries with ease.

The resulting language expresses all  NLOGSPACE queries, and the syntactic extension is simple and practical. It also naturally extends to GQL; in fact we provide a \emph{concrete proposal for the standards}. We plan to put our proposal on the ISO agenda, convince the committee that the proposal allows backward compatibility, and clarify which problems it solves for the standards.
Similar to the evolution of SQL standard though the 1986/89/92 versions, a similar evolution is likely for GQL and SQL/PGQ; therefore we believe that our results can have a real impact on how knowledge graphs based on the LPG model will be queried in the future.


\input{related}

\OMIT{
\subsection{Pointers; to be removed when all is written}
Some pointers

KG market 2B now, expected 25B in 10 years.
Almost everyone uses RDF or LPG with about equal split atm.
https://www.fortunebusinessinsights.com/knowledge-graph-market-11213

Also this https://www.marketreportsworld.com/market-reports/graph-database-market-14723746 gives a 60-40 to LPGs

There are areas where LPGs dominate (finance, supply chain and logistics, cybersecurity), where one needs fast traversals and analytics; areas where RDF dominates (healthcare, linked data) where reasoning and ontologies are more important. 

Bottom line: we have a split, LPG will grow. \wim
{mature? grow sounds like market share; which implies that we're saying that RDF will shrink} Both have their standards; RDF are older and have been properly scrutinized (citations).

LPGs are newer -- 2023 and 2024, analysis in its infancy. We have identified problems and here we suggest some fixes. 

We do both theory and concrete suggestions of what GQL v3 can do.

\subsection{Issues With GQL}

\begin{itemize}
\item Its path language expresses NL-complete problems but (even the CRPQ fragment?) does not capture NL. (Can we say this less abstractly? We can mention the increasing values over edges problem, which (AFAIK) leads to an open language opportunity (sliding windows). This means that ISO takes the problem seriously. Any other way to make the problem we're pointing out more concrete?)
\item No graph to graph compositionality. Mention the SIGMOD paper.
\end{itemize}

\subsection{What We Do}

}

%% file: related.tex
\paragraph{Related Work.}
The need to create new graph elements was already recognized in very
early graph database papers \cite{graphlog,good} (that precede LPGs
by decades and used a simpler graph abstraction), presenting
mechanisms for adding nodes and edges based on matches in an existing
graph. While \cite{good} showed relational completeness for a
restricted fragment and Turing completeness for the entire language,
capturing NLOGSPACE was estalished in \cite{graphlog} using (then)
recent proof of its closure under complement~\cite{immerman-book}.

In the context of LPGs, and specifically GQL and SQL/PGQ, extensions
of languages with graph element creation and with accounting for
variable bindings have been studied
\cite{filip2024,gpc,sigmod24-views,pods25,Liat2026,dominik2026} though not in the context of language design, which is our primary goal.  In
\cite{gpc}, a type system for variables in patterns was proposed, and
\cite{pods25} offered a cleaner model upon which we base our
proposal. All of \cite{dominik2026,filip2024,Liat2026} concern graph
transformations, and similarly to relational data exchange techniques \cite{ABLM14}, they use Skolemization to produce new graph elements. A proposal for defining
graph views (though not incorporating them into the queried graph) is
given in \cite{sigmod24-views}, while \cite{dominik2026} extends the \verb+MERGE+ facility
of Cypher \cite{cypher-updates} that allows creation of new
properties but is not as general as generating arbitrary graph
elements from matches. The closest in spirit to us are transformation
approaches of \cite{filip2024,Liat2026}. The former uses rules with content constructors, on top of Graph Pattern Calculus queries~\cite{gpc} and provides an implementation in Cypher. The work shows nicely how a transformation language can build on existing GQL constructs (without touching the basics of GQL). We, however, have a very different goal, which is to clean up fundamental aspects of GQL path matching and use the resulting path matching language to construct a simple and clean compositional graph-to-graph language. 
The latter paper \cite{Liat2026} stays in the realm of SQL/PGQ and requires the creation of
six relational views to generate new graph elements, which is more
cumbersome than the approach we propose and cannot be naturally
extended to a native graph language like GQL.
Transformational approaches also exist on the RDF side, for example
\cite{triple-iswc2002}.


%% file: preliminaries.tex
We assume that $\Label$, $\Key$, $\Value$, and $\Var$ are infinite sets of \emph{labels}, \emph{property names} (or \emph{keys}), \emph{values}, and \emph{variable names}, respectively. 
Moreover, given a set $A$, we denote by $\Fin(A)$ the set of all finite subsets of $A$.

\begin{definition}
A \emph{property graph} is a tuple $G = (N, E, \src, \tgt, \lab, \prop)$ where:
\begin{enumerate}
\item $N$ is a finite set of node identifiers.
\item $E$ is a finite set of edge identifiers that is disjoint from $N$.
\item $\src : E \to N$ indicates the source node (or starting node) of an edge.
\item $\tgt : E \to N$ indicates the target node (or ending node) of an edge.
\item $\lab : N \cup E \to \Fin(\Label)$ assigns a finite set of labels to each node and edge.
\item $\prop : (N \cup E) \times \Key \to \Value$ is a partial function such that if $\prop(o,k) = v$, then $v$ is the value of property $k$ for node or edge $o$. We assume that the support of $\prop$ is finite; that is, the set $\{ (o,k) \mid o \in N \cup E$, $k \in \Key \text{, and } \prop(o, k) \text{ is defined} \}$ is finite.
\end{enumerate}
\end{definition}

A \emph{graph element} in $G$ is either a node or an edge in $G$. A \emph{path} is an alternating sequence $g_1 \cdots g_n$ of nodes and edges such that, if $g_i \in E$ and $i \in  \{1,\ldots,n-1\}$, then $g_{i+1} = \tgt(g_i)$ and, moreover, if $g_i \in E$ and $i \in \{2,\ldots,n\}$, then $g_{i-1} = \src(g_i)$. In particular, paths can begin or end with an edge.  We define $\first(p) = g_1$ and $\last(p) = g_n$. A nonempty path $p$ is \emph{from $u$ to $v$} if $\first(p) = u$ and $\last(p) = v$. The empty path $\varepsilon$ is \emph{from $u$ to $u$}, for every node or edge $u$.

An \emph{annotated graph element} is a pair $(u, S)$ or a pair $[v,S]$ with $u \in N$, $v \in E$, and $S$ a finite subset of $\Var$. We call $(u, S)$ with $u \in N$ an \emph{annotated node element}, while we call $[v, S]$ with $v \in E$ an \emph{annotated edge element}. For the sake of readability, we will denote $(u, \emptyset)$ and $[v, \emptyset]$ by $(u)$ and $[v]$, respectively. For an annotated graph element $g$, we write $\el(g)$ for the graph element in $g$, that is, $\el((u,S)) = u$ and $\el([v,S]) = v$.

An \emph{annotated path} in $G$ is an alternating sequence $g_1 \cdots g_n$ of node and edge elements such that $\el(g_1) \cdots \el(g_n)$ is a path in $G$. If $n = 0$, the path is empty and we denote it by $\varepsilon$. 
Let $p$ be an annotated path in $G$. Then, similarly to paths, we define $\first(p) = o$ if either $(o, S)$ or $[o, S]$ is the first element of $p$. Similarly, $\last(p) = o$ if either $(o, S)$ or $[o, S]$ is the last element of $p$. 

\emph{Annotated path concatenation} will be fundamental to the semantics of \emph{regular path queries with variables (RPQVs)} in Section~\ref{sec:rpqvs}.
Given two graph elements $g_1$ and $g_2$, the \emph{join of $g_1$ with $g_2$}, denoted by $g_1 \bowtie g_2$, is defined as follows: 
\begin{itemize}
\item 
If $g_1 = (u, S_1)$ and
$g_2 = (u, S_2)$, then 
$g_1 \bowtie g_2 = (u, S_1 \cup S_2)$.
\item 
If $g_1 = [v, S_1]$ and
$g_2 = [v, S_2]$, then
$g_1 \bowtie g_2 = [v, S_1 \cup S_2]$.
\item If $g_1 = (u,  S_1)$ and
$g_2 = [v, S_2]$ with $\src(v) = u$, then
$g_1 \bowtie g_2 = (u,  S_1)[v, S_2]$.
\item If $g_1 = [v, S_1]$ and
$g_2 = (u,  S_2)$ with $\tgt(v) = u$, then
$g_2 \bowtie g_2 = [v, S_1](u,  S_2)$.
\end{itemize}
Notice that the join of two graph elements always produces an annotated path that respects the connectedness of elements in $G$.
In the case where $g_1 \bowtie g_2$ is defined, we say that the annotated graph elements $g_1$ and $g_2$ are \emph{joinable (in $G$)}. Given two annotated paths $p_1$ and $p_2$, the concatenation of $p_1$ with $p_2$, denoted by $\concat(p_1, p_2)$, is an annotated path defined as follows:
\begin{itemize}
\item If $p_1 = \varepsilon$, then $\concat(p_1, p_2) = p_2$.
\item If $p_2 = \varepsilon$, then $\concat(p_1, p_2) = p_1$.
\item If $p_1 = g_1 \cdots g_n$ and $p_2 = g_1' \cdots g_m'$, with $n \geq 1$ and $m \geq 1$, and $g_n$ and $g_1'$ are joinable, then 
$\concat(p_1, p_2) = g_1 \cdots g_{n-1} (g_n \bowtie g_1') g_2' \cdots g_m'$.
\end{itemize}

\begin{remark}
There is an important difference between our definition of joining (annotated) paths and the one in GQL and SQL/PGQ. Whereas we treat nodes and edges symmetrically, the standards do not. In the standards, concatenating node elements $(u,\{x\})$ and $(v,\{y\})$ will require $u = v$ and result in $(u,\{x,y\})$, which is also what we do. Concatenating edge elements $[u,\{x\}]$ and $[v,\{y\}]$ however will \emph{not} require $u = v$ and results in a path of the form $[u,\{x\}](\_,\{\})[v,\{y\}]$. That is, their semantics inserts a new node between $u$ and $v$ (which we denoted as $\_$). It is known that GQL and SQL/PGQ cannot express paths with increasing values on edges using positive combinations of \emph{path pattern expressions}~\cite{vldb25,pods25} which is their mechanism for matching paths. We believe that the reason for this inexpressibility is \emph{precisely this asymmetry between nodes and edges}. In Example~\ref{ex:increasingvalues}, we show how our approach, which we believe to be cleaner, can do it with a single RPQV, which is our counterpart of path pattern expressions.
\end{remark}

%% file: rpqv.tex
In this section, we introduce the notion of regular path query with variables (RPQV), which play a prominent role in the query language for property graphs introduced in this paper. To this end, we first need to define the conditions that can be included in such expressions. More precisely, a condition $\cond$ is defined by the following grammar, where $x, y \in \Var$, $c \in \Value$, and $k,k_1,k_2 \in \Key$:
\begin{multline*}
\cond  \quad ::= \quad x.k_1 = y.k_2  \ \mid \  x.k = c \ \mid \ x.k_1 < y.k_2 \ \mid \ x.k < c \ \mid \\
 \exists\regex  \ \mid \ (\cond \wedge \cond) \ \mid \ (\neg \cond),
\end{multline*}
Then, an RPQV expression $\expr$ is defined by the following grammar, where $\ell \in \Label$ and $x,y \in \Var$:
\begin{align*}
\node & \ ::= \	(\_) \ \mid \  (\ell)  \ \mid \  (x)  \ \mid \ (x:\ell) \\
\edge & \ ::= \	[\_]  \,\ \mid \  [\ell] \;\ \mid \  [x]  \;\ \mid \  [x:\ell]\\
\regex & \ ::= \	\varepsilon  \ \mid \  \langle\cond\rangle  \ \mid \  \node  \ \mid \ \edge  \ \mid \ 
(\regex + \regex)  \ \mid \  (\regex / \regex)  \ \mid \  (\regex^*)\\
\expr & \ ::= \regex(x,y)
\end{align*}
We refer to $\regex$ as a \emph{regular expression with variables}.

\begin{remark}
  In expressions of the form $\regex(x,y)$, the variables $x$ and $y$ will have a different role than those inside $\regex$. Indeed, whereas $x$ and $y$ will bind to nodes and edges of $G$, the variables inside $\regex$ are \emph{list variables}, i.e., they will bind to \emph{lists} of nodes and edges. List variables reflect GQL and SQL/PGQ's \emph{group variables} \cite{sigmod22,gpc,pods25}. They are also sufficiently powerful to simulate GQL and SQL/PGQ's \emph{path variables}, as we will show in Example~\ref{ex:increasingvalues}.
\end{remark}


\subsection{Semantics}
Let $G = (N, E, \src, \tgt, \lab, \prop)$ be a property graph and $p$ be an annotated path in $G$. For a variable $x$, we define $\last(p, x) = o$ if either $(o, S)$ or $[o, S]$ is the last annotated graph element of $p$ such that $x \in S$. The semantics of RPQVs is mutually recursive between the semantics of conditions $\cond$ and expressions $\regex$. We write $p_1 \xrightarrow{G,\regex} p_2$ to indicate that annotated path $p_2$ is reachable from annotated path $p_1$ in $G$ through the expression $\regex$, and write $p \models_G \cond$ to say that $p$ satisfies a condition $\cond$ in $G$.  We first define $p \models_G \cond$: 
\begin{itemize}
\item $\cond$ is $x.k_1 = y.k_2$, 
with 
$\last(p, x) = u$, $\last(p, y) = v$, $\prop(u, k_1)$ is defined, $\prop(v, k_2)$ is defined, and $\prop(u, k_1) = \prop(v, k_2)$;
\item $\cond$ is $x.k = c$ with 
$\last(p, x) = u$, $\prop(u, k)$ is defined, and $\prop(u, k) = c$;
\item $\cond$ is $x.k_1 < y.k_2$ with 
$\last(p, x) = u$, $\last(p, y) = v$, $\prop(u, k_1)$ is defined, $\prop(v, k_2)$ is defined, and $\prop(u, k) < \prop(v, \ell)$;
\item $\cond$ is $x.k < c$ with 
$\last(p, x) = u$, $\prop(u, k)$ is defined, and $\prop(u, k) < c$;
\item $\cond$ is $\exists\expr$ and there exists an annotated path $p'$ such that either $p = \varepsilon$ and $\varepsilon \xrightarrow{G, \regex} p'$, or $p \neq \varepsilon$ and $\last(p) \xrightarrow{G, \regex} p'$;
 
\item $\cond$ is $(\cond_1 \wedge \cond_2)$ and $p \models_G \cond_1$ and $p \models_G \cond_2$; or
\item $\cond$ is $(\lnot \cond_1)$ and it is not the case that $p \models_G \cond_1$.
\end{itemize}
We define the relation $p_1 \xrightarrow{G,\regex} p_2$ for in Figure \ref{fig-reach-rpqv}. Having this terminology, the set of annotated paths that matches an expression $\regex$ is defined as \[\Paths(G,\regex) := \{ p \mid \varepsilon \xrightarrow{G, \regex}  p \}\;.\] 
Finally, the semantics of $\expr = \regex(x,y)$ on $G$ is defined as follows. For $x \in \Var$ and an annotated path $p$, let $\List(x,p)$ be the list of nodes and edges in $p$ that are marked with variable $x$. Formally, we have that:
\begin{itemize}
\item $\List(x, \varepsilon) = [\,]$;

\item $\List(x, (n,S)) = [n]$ if $x \in S$ and $\List(x, (n,S)) = [\,]$ otherwise;

\item $\List(x, [e,S]) = [e]$ if $x \in S$ and $\List(x, [e,S]) = [\,]$ otherwise;

\item $\List(x, g_1 g_2 \ldots g_n) = \List(x, g_1) \cdot \List(x, g_2) \cdot \ldots \cdot L(x, g_n)$, where $g_1 g_2 \ldots g_n$ is an annotated path and $\cdot$ is the concatenation operator for lists.
\end{itemize}
For example, we have that:
\begin{align*}
\List(x, (n_1, \{x,y\})[e_1,\{z\}](n_1,\{x\})[e_2,\{x,z\}](n_2,\{x,y,z\})) \ = \ [n_1,n_1,e_2,n_2].
\end{align*}
Then, an RPQV returns a \emph{set of bindings} $f$ for the variables in the expression:
\begin{multline*}
    \sem{\regex(x,y)}_G \ := \ \{f \mid \dom(f) = \dom(\regex) \cup \{x,y\}, \text{ and }\\ 
    \exists p \in \Paths(G,\regex) 
    \text{ from $f(x)$ to $f(y)$ such that}\\ 
    f(u) = \List(p,u) \text{ for all } u \in \dom(\regex)\}.
\end{multline*}

\begin{figure}[t]
\begin{center}
\begin{tabular}{lcp{85mm}}
$p_1 \xrightarrow{G, \varepsilon}  p_2$ &\ \ \ \ \ & $\text{if } p_1 = p_2$\\
$p_1 \xrightarrow{G, \langle\cond\rangle} p_2$  && $\text{if } p_1 \models_G \cond \text{ and } p_1 = p_2$\\
$p_1 \xrightarrow{G, (\_)}  p_2$ && $\text{if } p_2 = \concat(p_1, (u)) \text{, where } u \in N$\\
$p_1 \xrightarrow{G, (\ell)}  p_2$ && $\text{if } p_2 = \concat(p_1, (u)) \text{, where } u \in N \text{ and } \ell \in \lab(u)$\\
$p_1 \xrightarrow{G, (x)}  p_2$ && $\text{if } p_2 = \concat(p_1, (u, \{x\})) \text{, where } u \in N$\\
$p_1 \xrightarrow{G, (x:\ell)}  p_2$ && $\text{if } p_2 = \concat(p_1, (u, \{x\})) \text{, where } u \in N \text{ and}$ \mbox{$\ell \in \lab(u)$}\\
$p_1 \xrightarrow{G, [\_]}  p_2$ && $\text{if } p_2 = \concat(p_1, [v]) \text{, where } v \in E$\\
$p_1 \xrightarrow{G, [\ell]}  p_2$ && $\text{if } p_2 = \concat(p_1, [v]) \text{, where } v \in E \text{ and } \ell \in \lab(v)$\\
$p_1 \xrightarrow{G, [x]}  p_2$ && $\text{if } p_2 = \concat(p_1, [v, \{x\}]) \text{, where } v \in E$\\
$p_1 \xrightarrow{G, [x:\ell]}  p_2$ && $\text{if } p_2 = \concat(p_1, [v, \{x\}]) \text{, where } v \in E \text{ and}$ \mbox{$\ell \in \lab(v)$}\\
$p_1 \xrightarrow{G, \regex_1 + \regex_2}  p_2$ && $\text{if } p_1 \xrightarrow{G, \regex_1}  p_2 \text{ or } p_1 \xrightarrow{G, \regex_2}  p_2$\\
$p_1 \xrightarrow{G, \regex_1/\regex_2}  p_2$  && $\text{if there exists a path $p_3$ such that } p_1 \xrightarrow{G, \regex_1}  p_3$\\
&&$\text{and } p_3 \xrightarrow{G, \regex_2}  p_2$\\
$p_1 \xrightarrow{G, \regex^*}  p_2$	&& $\text{if there exists } n \in \mathbb{N} \text{ such that } p_1 \xrightarrow{G, \regex^n}  p_2$, where\\ &&$\regex^0 = \varepsilon$ and $\regex^{m} = \regex/\regex^{m-1}$ for every \hbox{$m \geq 1$}.
\end{tabular}
\end{center}
\caption{Definition of reachability among annotated paths for RPQVs.\label{fig-reach-rpqv}}
\end{figure}

\subsection{Notation and Examples}\label{sec:notation} In an expression of the form $\regex(x,y)$, we refer to $x$ and $y$ as \emph{boundary variables}. We  will use GQL-style infix notation and write $\regex(x,y)$ as $(x)\ \regex\ (y)$, or $[x]\ \regex\ (y)$, etc., depending on whether the boundary variables are nodes or edges. 
This notation cannot express all RPQVs because, in general, it can happen that $x$ binds to a node for some answers and to an edge for other answers, but it is sufficient for all our examples.
Furthermore, we omit the explicit concatenation operator / to improve readability. This is standard in formal languages and is also done in GQL and SQL/PGQ.

\begin{example}\label{ex:increasingvalues}
    The RPQV 
    $$(x) \ ((z_1)[\_](z_2)\langle z_1.\mathit{value} < z_2.\mathit{value} \rangle )^* \ (y)$$ matches annotated paths from node $x$ to node $y$ with \emph{increasing values on nodes}. To understand this, notice that the variable $z_2$ of one iteration joins with the variable $z_1$ from the next iteration (just like in Cypher and GQL). 
    More formally, this expression returns bindings $f$ such that there exists an annotated path $p$ from $f(x)$ to $f(y)$ such that $\mathit{value}$ increases between every pair of consecutive nodes on $p$. Moreover, $f(z_1)$ is the list of all nodes on $p$ except for the last, and $f(z_2)$ is the list of all nodes on $p$ except for the first. If we want to have a single variable that contains all the nodes in $p$, we could write $(x) ((z_1)[\_](z_2)\langle z_1.\mathit{value} < z_2.\mathit{value} \rangle )^* (z_1) (y)$, so that the list of all nodes in the annotated path is stored in $f(z_1)$. Notice that the extra expression $(z_1)$ forces the last node in the annotated path to be concatenated to the list for variable $z_1$ computed by the expression $((z_1)[\_](z_2)\langle z_1.\mathit{value} < z_2.\mathit{value} \rangle )^*$, which contains all nodes except the last one.
    
    For \emph{increasing values on edges} we can use the RPQV
    $$[x]\ ([z_1](\_)[z_2] \langle z_1.\mathit{value} < z_2.\mathit{value} \rangle )^*\ [y]\;.$$
    Notice that this RPQV is not expressible as a GQL or SQL/PGQ path pattern expression~\cite{vldb25}. For us, however, it is completely dual to the RPQV for the increasing values on nodes condition, which is desirable, since nodes and edges should have the same status in the language. If one wants the latter expression to return a node-to-node path, one can write
    $$(x)\ (\_)\;([z_1](\_)[z_2] \langle z_1.\mathit{value} < z_2.\mathit{value} \rangle )^*\;(\_)\ (y)\;.$$
\end{example}

\begin{remark}
    RPQVs can match entire paths in a single variable, which means that they can fulfill the role of GQL's path variables. To do this, we can simply use a fresh variable that we use to annotate every element. For instance, the expression 
    $([z_1][z](\_)(z)[z_2][z] \langle z_1.\mathit{value} < z_2.\mathit{value} \rangle )^*$, obtained from the increasing values on edges RPQV by adding $[z]$ or $(z)$ right after each (possibly anonymous) variable inside its $\regex$, binds the entire paths to the variable $z$. 
\end{remark}



\subsection{Number of Output Paths and Complexity}
We note that $\sem{\regex(x,y)}_G$ can be infinite if the graph $G$ has cycles.
Practical languages solve this issue by imposing that the paths $p$ should be \emph{shortest}, \emph{simple} (no repeating nodes) or \emph{trails} (no repeating edges)~\cite{sigmod22,gpc,icdt23,cypher}. The same restrictions can be applied to RPQVs and are independent of the design of RPQVs themselves. In fact, it makes much sense to study RPQVs \emph{without} these restrictions, because evaluation problems for RPQs without list variables are typically in NLOGSPACE~\cite{CruzMW87,MendelzonW95}. These problems become NP-complete if constraints such as simple paths~\cite{MendelzonW95,BaganBG20} and trails~\cite{MartensNP23} are added, even on undirected graphs~\cite{MartensP22}. 

\sloppy
Ideally, we would therefore have RPQVs that can be evaluated in NLOGSPACE, even if the mechanism for forcing them to match a finite number of results may render evaluation NP-complete. An additional argument for our approach is that it is well-known that compact representations of the infinitely many paths that match RPQs can be computed in linear time~\cite{pmr,pathfinder}. This means that a query engine could internally use this representation (similar to how we implement factorized databases~\cite{OlteanuS16}) and we do not need to force their result set to be finite.

Regarding complexity, for each fixed RPQV $\expr$, consider the following problem $\mathsf{Eval}(\expr)$: Given a graph $G$ and binding $f$, is $f \in \sem{\expr}$?

\begin{proposition}\label{prop:1}
$\mathsf{Eval}(\expr)$ is in NLOGSPACE for each RPQV $\expr$.
\end{proposition}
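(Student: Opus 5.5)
We decide membership by a nondeterministic left-to-right scan that guesses the witnessing annotated path $p$ one element at a time and never stores $p$. Since $\expr=\regex(x,y)$ is fixed, the set $V=\dom(\regex)$ of list variables is fixed and finite. The input binding $f$ gives, for each $z\in V$, a list $f(z)$ of graph elements, together with the elements $f(x)$ and $f(y)$.

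The core of the argument is to compile $\regex$, by structural induction, into a finite automaton $A_\regex$ with transitions of three kinds: $\varepsilon$-moves, moves that append an annotated atom $(\_),(\ell),(x),(x:\ell),[\_],\dots$, and moves guarded by a condition $\langle\cond\rangle$. This is the usual Thompson construction for $+$, $/$ and $^*$.

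The only non-standard points are the following.
\begin{itemize}
\item[(i)] Concatenation may merge the new element with the last one when they are the same node or the same edge. The current path therefore evolves only by either extending at the end or adding variables to the annotation of the last element.
\item[(ii)] Conditions refer to $\last(p,z)$ for $z\in V$.
\end{itemize}

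The configuration of the scan is then:
\begin{itemize}
\item a state of $A_\regex$;
\item the current last element $\last(p)$, or a flag meaning $p=\varepsilon$;
\item the set $S$ of variables annotating that last element (not yet committed);
\item for each $z\in V$, the element $\last(p,z)$ (or ``undefined'');
\item for each $z\in V$, a counter $i_z$ recording how many entries of $f(z)$ have already been matched by committed elements.
\end{itemize}
All of this uses $O(\log|G|+\log|f|)$ space, because $|V|$ and the number of states are constants.

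Whenever the scan moves past an element, i.e.\ a genuine extension rather than a merge, the annotation of the previous element is final. At that point we check, for each $z\in S$, that entry $i_z+1$ of $f(z)$ equals that element, and we increment $i_z$. Reading entry $i$ of a list in the input is a logspace operation. A merge simply unions new variables into $S$ and updates $\last(p,z)$. An append atom $(u,\ldots)$ or $[v,\ldots]$ is guessed as a node or edge of $G$ and checked for joinability using $\src$/$\tgt$. At the start we check $\first(p)=f(x)$, or handle the empty path appropriately. We accept when the automaton is in a final state, the final element's annotation has been committed, $\last(p)=f(y)$, and every $i_z=|f(z)|$. This is exactly the condition $f(z)=\List(p,z)$ for all $z\in V$.

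Conditions are the main obstacle. Atomic comparisons such as $z.k_1<w.k_2$ or $z.k=c$ are evaluated directly from the stored $\last(p,z)$, $\last(p,w)$ and $\prop$. The case of $\exists\regex'$ requires deciding whether some path matching $\regex'$ starts from $\last(p)$, or is arbitrary if $p=\varepsilon$. This is a reachability question on $A_{\regex'}$ with no list constraint. It can be handled with the same machine as above, dropping the counters, and with the atomic conditions inside $\regex'$ referring to the new path's own annotations, starting from $\varepsilon\to\last(p)$ appropriately.

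Negation $\neg\cond$ requires complementing an NLOGSPACE test. We would use Immerman–Szelepcsényi closure of NLOGSPACE under complement. Because the nesting depth of $\exists$ and $\neg$ in the fixed $\expr$ is constant, we proceed by induction on nesting depth: each condition is decided by an NLOGSPACE machine, with negations resolved by inductive counting. A constant-depth tower of NLOGSPACE oracles collapses back to NLOGSPACE, using $\mathrm{L}^{\mathrm{NL}}=\mathrm{NL}$ and the fact that a sub-call needs only logspace information (the current last element and the $\last(\cdot,z)$ pointers) passed to it. This yields $\mathsf{Eval}(\expr)\in$ NLOGSPACE.
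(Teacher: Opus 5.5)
Your argument is correct, but it is organized differently from the paper's proof.

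\textbf{The paper's route.} The paper builds no machine. It extends $G$ with an ordered sort of list positions $p_1 \prec \cdots \prec p_n$ and encodes each list $f(z)$ as position/element pairs. It then asserts that membership $f \in \sem{\expr}_G$ is expressible in FO(TC), which can be evaluated in NLOGSPACE by Immerman's result.

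\textbf{How your proof relates to it.} Your proof is essentially the explicit content of that ``routine'' encoding. The tuple over which the transitive closure would range is your configuration: the automaton state, the last element with its pending annotation, the pointers $\last(p,z)$, and one list position per variable, which plays the role of your counters $i_z$. Your observation that concatenation only ever modifies the last element of the path is what makes such a bounded abstraction sound. It means an annotation is final once the path is extended, so the lists can be checked online.

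\textbf{What each route buys.} The FO(TC) route gets nested $\exists\regex'$ and $\neg$ inside conditions for free. FO(TC) is closed under negation and nesting, and Immerman--Szelepcs\'enyi is already built into its NLOGSPACE upper bound. You instead argue the collapse by hand, via induction on the nesting depth of conditions, and this is done correctly. In exchange, your version is self-contained and makes visible exactly which logspace information about the guessed path must be carried. The paper's sketch leaves this implicit.

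\textbf{One refinement.} The cleaner justification for the collapse is not $\mathrm{L}^{\mathrm{NL}}=\mathrm{NL}$, since your outer scanning machine is itself nondeterministic. It is the guess-and-verify simulation you allude to. Each condition and its complement are in NLOGSPACE, and a query (the last element and the $\last(\cdot,z)$ pointers) fits on the work tape. So the scanner can guess the condition's truth value and confirm it with the corresponding NLOGSPACE procedure.
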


\begin{proof}
\emph{Sketch}. The idea is to encode both $G$ and $f$ so that the evaluation problem becomes expressible in FO(TC), first-order with transitive closure, which is known to be evaluable in NL \cite{immerman-book}. Graphs are already relational structures over the universe that contains $N\cup E$ and the values present as property values. We further extend the universe with a disjoint ordered set $p_1,\ldots,p_n$ of positions in lists, with $n$ being the maximum length of a list in the range of $f$. Each such list will be encoded as a set of pairs $(p_1,e_1),(p_2,e_2)$, etc, indicating a position of a graph element in the list, with $p_1 \prec p_2 \prec \cdots \prec p_n$. With this, and the access to $\prec$ in addition to predicates defining $G$ and $f$, it is routine to encode the evaluation problem in FO(TC). \hfill $\Box$
\end{proof}

%% file: hash-datalog.tex
$\hashd$ (pronounced \emph{hash-Datalog}) is a simple graph transformation language that uses Datalog with safe negation and with RPQVs in the bodies. A \emph{$\hashd$ program} is a sequence of \emph{computation} and \emph{update} programs. The role of computation programs is to compute the necessary information for defining a new graph, including the IDs of new nodes and edges to be generated. Update programs specify the concrete relations $\node$, $\edge$, $\src$, $\tgt$, $\lab$, and $\prop$ that constitute the new graph. 
In order to create new node IDs and edge IDs, rules
\begin{align}\label{rule-dat}
A(\bar x) \ \gets \ B_1(\bar y_1), \ldots, B_n(\bar y_n), \lnot C_1(\bar z_1), \ldots, \lnot C_m(\bar z_m) \tag{\dag}
\end{align}
in computation programs will give us access to two relations: the relation $A$, which is obtained using standard Datalog semantics and the relation $\#A$ which, for each tuple in $A$, stores an identifier (using Skolemization).

\begin{remark}\label{remark:hash-tables}
At first sight, it may seem expensive to produce both relations $A$ and $\#A$. But this is actually not the case in systems that implement GQL or SQL/PGQ. Indeed, such systems typically produce a table $T_A$ for the tuples in $A$, and the extra value needed for $\#A$ can easily be obtained by taking, e.g., the internal ID for the respective tuple in $T_A$ which every DBMS will provide. 
\end{remark}

\subsection{A Guided Tour of \hashd}
We now look at a few examples that illustrate the capabilities of $\hashd$.
For space reasons, we provide its fully formal definition in Appendix~\ref{app:hashd}.

\paragraph{A Simple Graph Transformation.}
Assume that $G_1$ is the property graph

\centerline{\resizebox{.6\linewidth}{!}{%
    \begin{tikzpicture}
        \tikzstyle{every state}=[draw,thick,rectangle,rounded corners,fill=white!85!black,minimum size=5mm, text=black, font=\ttfamily, inner sep=0pt]
        \tikzstyle{every node}=[font=\ttfamily]

        \node[state] (n1) at (-4,-0) {};
        \node[state] (n2) at (-2,0) {};
        \node[state] (n3) at (2,0) {};
        \node[state] (n4) at (4,-0) {};

        \node[state,dashed,fill=white!95!black,inner sep=2pt] (e1) at (0,0) {
            \begin{tabular}{l}
                \small {\bf k:} 1
            \end{tabular}
        };

        \node[rectangle,draw,dashed,fill=white] (e1label) at (-3,0) {\ttfamily \small $a$};
        \node[rectangle,draw,dashed,fill=white] (e2label) at ($(e1)+(0.4,0.4)$) {\ttfamily \small $b$};
        \node[rectangle,draw,dashed,fill=white] (e3label) at (3,0) {\ttfamily \small $a$};
      
    \begin{pgfonlayer}{background}        
        \path[-latex,thick]
        (n1) edge (n2)
        (n2) edge (n3)
        (n4) edge (n3)
        ;
          
    \end{pgfonlayer}

    \end{tikzpicture}
    }
}

\noindent with four nodes and three edges. Two edges are labeled $a$ and one has property $k$ with value 1 and is labeled $b$. From $G_1$, we want to obtain a new graph $G_2$ where every $a$-labeled edge in $G_1$ becomes a new node with label $c$, and there is an edge with label $d$ from a node $u_1$ to node $u_2$ if in $G_1$ there was an edge with label $b$ from the end-node of the edge represented by $u_1$ to the end-node of the edge represented by $u_2$. In a picture (that still contains $G_1$ in black), the new graph $G_2$ (in blue) should be obtained as follows:

\centerline{\resizebox{.6\linewidth}{!}{%
    \begin{tikzpicture}
        \tikzstyle{every state}=[draw,thick,rectangle,rounded corners,fill=white!85!black,minimum size=5mm, text=black, font=\ttfamily, inner sep=0pt]
        \tikzstyle{every node}=[font=\ttfamily]

        \node[state] (n1) at (-4,-0) {};
        \node[state] (n2) at (-2,0) {};
        \node[state] (n3) at (2,0) {};
        \node[state] (n4) at (4,-0) {};

        \node[state,dashed,fill=white!95!black,inner sep=2pt] (e1) at (0,0) {
            \begin{tabular}{l}
                \small {\bf k:} 1
            \end{tabular}
        };

        \node[state,dashed,draw=blue!80,fill=white!95!black,inner sep=2pt] (ne1) at (0,1.2) {
            \begin{tabular}{l}
                \small {\bf k:} 1
            \end{tabular}
        };

        \node[rectangle,draw,dashed,fill=white] (e1label) at (-3,0) {\ttfamily \small $a$};
        \node[rectangle,draw,dashed,fill=white] (e2label) at ($(e1)+(0.4,0.4)$) {\ttfamily \small $b$};
        \node[rectangle,draw,dashed,fill=white] (e3label) at (3,0) {\ttfamily \small $a$};

        \node[rectangle,draw,dashed,fill=white] (ne1label) at ($(ne1)+(0.4,0.4)$) {\ttfamily \small $d$};

    \begin{pgfonlayer}{background}        

        \node[draw=blue!80,thick,fill=blue!20,inner sep=8pt,rectangle,rounded corners,fit=(n1) (n2)] (nn1) {};
        \node[draw=blue!80,thick,fill=blue!20,inner sep=8pt,rectangle,rounded corners,fit=(n3) (n4)] (nn2) {};
        
        \path[-latex,thick]
        (n1) edge (n2)
        (n2) edge (n3)
        (n4) edge (n3)
        ;

        \path[-latex,thick,draw=blue!80]
        (nn1) edge [bend left] (nn2);

    \end{pgfonlayer}

    \node[rectangle,draw=blue!80,fill=white] (nn1label) at ($(nn1)+(0.5,0.65)$) {\ttfamily \small $c$};
    \node[rectangle,draw=blue!80,fill=white] (nn2label) at ($(nn2)+(0.7,0.65)$) {\ttfamily \small $c$};

    \end{tikzpicture}
    }}
    
\noindent Furthermore, the edges with label $d$ in $G_2$ should inherit the values of property $k$ from the original edges in $G_1$. So, the new graph $G_2$ will be

\centerline{\resizebox{.3\linewidth}{!}{%
    \begin{tikzpicture}
        \tikzstyle{every state}=[draw,thick,rectangle,rounded corners,fill=white!85!black,minimum size=5mm, text=black, font=\ttfamily, inner sep=0pt]
        \tikzstyle{every node}=[font=\ttfamily]

        \node[state] (n1) at (-2,0) {};
        \node[state] (n2) at (2,0) {};

        \node[rectangle,draw,fill=white] (nn1label) at ($(n1)+(0.3,0.3)$) {\ttfamily \small $c$};
        \node[rectangle,draw,fill=white] (nn2label) at ($(n2)+(0.3,0.3)$) {\ttfamily \small $c$};

        \node[state,dashed,fill=white!95!black,inner sep=2pt] (e1) at (0,0) {
            \begin{tabular}{l}
                \small {\bf k:} 1
            \end{tabular}
        };

        \node[rectangle,draw,dashed,fill=white] (e1label) at ($(e1)+(0.4,0.4)$) {\ttfamily \small $d$};
      
    \begin{pgfonlayer}{background}        
        \path[-latex,thick]
        (n1) edge (n2)
        ;
          
    \end{pgfonlayer}

    \end{tikzpicture}
    }
}

\noindent
A $\hashd$ program doing this would first define predicates for generating identifiers for the nodes and edges in $G_2$ in its \emph{computation program}:

\smallskip
$
\begin{array}{rl}
A(x, e, y) & \gets (x)\ (\_)[e\!:\!a](\_)\ (y)\\
B(x, y, z) & \gets \#A(u_1, e_1, v_1, x), \#A(u_2, e_2, v_2, y),\  (v_1)\;(\_)[z\!:\!b](\_)\;(v_2)
\end{array}
$
\smallskip

\noindent Intuitively, $A$ has triples $(x,e,y)$ such that $e$ is an $a$-labeled edge from $x$ to $y$ in $G_1$. (Different edges $e$ in $G_1$ will yield different tuples in $A$.) Furthermore, $\#A$ has quadruples $(x,e,y,h)$ such that $h$ is unique for each such combination of $(x,e,y)$. That is, we can think of the values $h$ as new node IDs in $G_2$. Each $a$-labeled edge from $x$ to $y$ in $G_1$ will generate a different node in $G_2$, because we included the variable $e$ in the definition of $A$. If we would have omitted $e$, then we would generate at most one node in $G_2$ for each node pair $(x,y)$ in $G_1$.
Moreover, we use the new node IDs to define $B$, which intuitively has triples $(x,y,z)$ such that $x$ and $y$ are $a$-labeled edges in $G_1$ -- expressed as $\#A(u_1, e_1, v_1, x)$ and $\#A(u_2, e_2, v_2, y)$ -- and $z$ is a $b$-labeled edge from $v_1$ to $v_2$.

We now define the property graph predicates in the \emph{update program}:

\smallskip
$
\begin{array}{rl@{\hspace{2cm}}rl}
\node(h) & \gets \#A(x, e, y, h) & \lab(x, y) & \gets \node(x), y = ``c" \\
\edge(h) & \gets \#B(u, v, w, h) & \lab(x, y) & \gets \edge(x), y = ``d"\\
\src(e, u) & \multicolumn{3}{l}{\gets \edge(e), \node(u), \#B(u, z_2, z_3, e)}\\
\tgt(e, v) & \multicolumn{3}{l}{\gets \edge(e), \node(v), \#B(z_1, v, z_3, e)}\\
\prop(x, y, z) & \multicolumn{3}{l}{\gets  \edge(x), y = ``k", \#B(y_1, y_2, y_3, x), z = y_3.k}
\end{array}
$


\paragraph{The Dual Graph (and Increasing Values on Edges Revisited).} 
$\hashd$ can copy labels or key/value pairs from the input, for example to compute the 
\emph{dual graph} $G^*$ of any given property graph $G$ (which is obtained from $G$ by turning nodes into edges and vice versa). In $\hashd$ we can do it as follows:


\smallskip
$
\mathcal{C}: \qquad
\begin{array}{rl@{\hspace{1cm}}rl}
N(e) & \gets [e] & 
L(x,y) & \gets \lab(x,y)\\
E(e_1,x,e_2) & \gets [e_1]\;[\_](x)[\_]\;[e_2] &
P(x,y,z) & \gets \prop(x,y,z)
\end{array}
$

\smallskip
$
\mathcal{U}: \qquad
\begin{array}{rl@{\hspace{2cm}}rl}
\node(h) & \gets \#N(e, h) & \lab(x, y) & \gets L(x,y)\\
\edge(h) & \gets \#E(e_1, x, e_2, h) & \prop(x, y, z) & \gets  P(x,y,z)\\
\src(e, u) & \multicolumn{3}{l}{\gets \edge(e), \node(h_u), \#N(u,h_u), \#E(u, z_2, z_3, e)}\\
\tgt(e, v) & \multicolumn{3}{l}{\gets \edge(e), \node(h_v), \#N(v,h_v), \#E(z_1, z_2, v, e)}
\end{array}
$

\smallskip
\noindent
The computation program $\mathcal{C}$ defines hash predicates $\#N$ and $\#E$ for generating the node and edge IDs in the dual graph $G^*$, and copies the label and property information in $L$ and $P$. The update program $\mathcal{U}$ uses these to define all the components of the new graph $G^*$. Notice that $\prop$ in the computation program refers to the properties of $G$, whereas in the update program it is used to populate $G^*$. This is why we keep the computation and update programs separate.

Interestingly, the RPQV in Example
\ref{ex:increasingvalues}
can be evaluated on the dual graph $G^*$ to determine whether there is a path from $x$ to $y$ with increasing node values, which corresponds to a path with increasing edge values in $G$.



\paragraph{Turning Paths into Edges.}
Assume we are given a graph $G_1$ and we want to construct a new graph $G_2$ that consists only of the nodes with owners ``Mike'' and ``Megan'' in $G_1$. Furthermore, each path from ``Mike'' to ``Megan'' in $G_1$ such that $\mathit{value}$ increases along edges should become an edge in $G_2$.

\smallskip
$
\mathcal{C}: \qquad
\begin{array}{rl}
\mathit{Megan}(x) & \gets (x)\; (y)\langle y.\mathit{owner} = ``\text{Megan}"\rangle \; (x)\\
\mathit{Mike}(x) & \gets (x)\;(y)\langle y.\mathit{owner} = ``\text{Mike}"\rangle\; (x)\\
\mathit{Path}(x, y, z) & \gets 
(x)\ (\_)([z](\_)[z'] \langle z.\mathit{value} < z'.\mathit{value} \rangle )^*[z](\_)\  (y)\\[2mm]
\end{array}
$

$
\mathcal{U}: \qquad
\begin{array}{rl@{\hspace{1cm}}rl}
\node(x) & \gets  \#\mathit{Megan}(y, x) & 
\edge(p) & \gets  \#\mathit{Path}(x, y, z, p)\\
\node(x) & \gets  \#\mathit{Mike}(y, x) & &\\
\src(p, y) & \multicolumn{3}{l}{\gets  \edge(p), \node(y), \#\mathit{Path}(z_1, z_2, z_3, p), \#\mathit{Megan}(z_1, y)}\\
\tgt(p, y) & \multicolumn{3}{l}{\gets  \edge(p), \node(y), \#\mathit{Path}(z_1, z_2, z_3, p), \#\mathit{Mike}(z_3, y)}\\
\end{array}
$

\smallskip
\noindent 
Examples such as this one become significantly more interesting when we add aggregation over paths to the language. GQL and SQL/PGQ allow  this, and it is easy to extend our proposal with it (see Appendix~\ref{app:aggregation}). Essentially, the extension will allow us to write 
$\prop(x, y, z) \gets \edge(x), y = ``\text{length}", \#B(z_1, z_2, z_3, x), z = \text{length}(z_2)$ if we want to add a property ``length'' to each edge in the output of the previous program that has the length of the corresponding path.




\paragraph{Increasing Values on Both Nodes and Edges.} 
Whereas GQL and SQL/PGQ path pattern expressions can match paths with increasing values on nodes, but cannot match paths with increasing values on edges, $\hashd$ can \emph{even match paths that have both properties}. This is not surprising once we know that $\hashd$ can express all properties in NLOGSPACE, but it is instructive to see how it can be done.
In fact, $\hashd$ can do it in two very different ways, both of which we will show here.

The first way is with a single RPQV. We show it with an RPQV that starts in a node and ends in an edge.
\[(x) \ (u_1)[v_1]\big((u_2) \langle u_1.\mathit{value} < u_2.\mathit{value} \rangle (u_1) [v_2] \langle v_1.\mathit{value} < v_2.\mathit{value}\rangle [v_1] \big)^*\ [y] \]
Intuitively, this RPQV works as follows. Recall that $x$ and $y$ simply bind to the first node and last edge in the matched path, respectively. Variables $v_1$, $v_2$, $u_1$, $u_2$ will bind to lists. The expression matches the first node we see in variable $u_1$ and the first edge in $v_1$. Then, an iteration starts in which we repeatedly do the following:
\begin{itemize}
\item add the next node to the list for $u_2$;
\item check if the last node in the list for $u_1$ has a smaller value than the node we just added to the list for $u_2$;
\item add the current node to the list for $u_1$; 
\item add the next edge to the list for $v_2$;
\item check if the last edge in the list for $v_1$ has a smaller value than the edge we just added to the list for $v_2$;
\item add the current edge to the list for $v_1$; 
\end{itemize}
Notice how our definition of annotated path concatenation is crucial for how this RPQV works. Using the node and edge collapsing mechanism (called \emph{joining} in Section~\ref{sec-prelim}), we stay in the same graph element until the RPQV switches to a different kind of element (i.e., a switch from node to edge or vice versa).

The second way relies much more significantly on the power of combining Datalog and element creation. In the computation program, we can hash node/edge pairs and edge/node pairs so that we can turn them into new nodes in the new graph. For readability, we use $\_$ do denote variables that do not join.

\smallskip
$
\begin{array}{rl}
\mathit{NE}(n,e) &\gets \src(e,n)\\
\mathit{EN}(e,n) &\gets \tgt(e,n)\\
\mathit{GoodEdge}(ne,en) & \gets \#\mathit{NE}(n_1,e,ne), \#\mathit{EN}(e,n_2,en), n_1.\mathit{value} < n_2.\mathit{value}\\
\mathit{GoodEdge}(en,ne) & \gets \#\mathit{EN}(e_1,n,en), \#\mathit{NE}(n,e_2,ne),  e_1.\mathit{value} < e_2.\mathit{value}\\
\end{array}
$


\medskip

$
\begin{array}{rl}
\node(h) &  \gets \#\mathit{NE}(\_,\_,h)\\
\node(h)  & \gets \#\mathit{EN}(\_,\_,h)\\
\edge(e) & \gets \#\mathit{GoodEdge}(\_,\_,e)\\
\src(e,s) & \gets \#\mathit{GoodEdge}(s,\_,e)\\
\tgt(e,t) & \gets \#\mathit{GoodEdge}(\_,t,e)\\
\end{array}$

\medskip
On the resulting graph, the answers that start in a node and end with an edge can now be obtained by the rule

\smallskip 
$
\begin{array}{rl}
\mathit{Answer}(n,e)  \gets & (id_1)\; ((\_) [\_] (\_))^*\;  (id_2),\\
                            & \#\mathit{NE}(n,\_,id_1), \#\mathit{NE}(\_,e,id_2)
\end{array}
$

having access to the hash predicates. The other combinations for start and end of paths (node/node, edge/edge, edge/node) are similar.

\subsection{The Formal Definition of \hashd}\label{app:hashd}






To define $\hashd$ programs, we first need to define the notions of \emph{computation} and \emph{update programs}. A computation program is a set of Datalog rules defined over a property graph that produces a set of intensional predicates. An update program is a set of rules defined over those intensional predicates that produces a property graph. In this way, a sequence of computation/update programs produces a sequence of property graphs.


\newcommand{\comp}{\Pi_{\textit{comp}}}
\newcommand{\upd}{\Pi_{\textit{upd}}}

\subsubsection{Computation Programs.} Formally, a \emph{computation program} $\comp$ is a non-recursive Datalog program whose rules are of the form \eqref{rule-dat}, where (i) each extensional atom $B_i(\bar y_i)$ and each extensional atom $C_j(\bar z_j)$ is either an RPQV expression or one of the relational atoms $\node(x)$, $\edge(x)$, $\src(x, y)$, $\tgt(x, y)$, $\lab(x, y)$, $\prop(x, y, z)$ that define the components of a property graph; (ii) $\bar x$, $\bar y_1$, $\ldots$, $\bar y_n$, $\bar z_1$, $\ldots$, $\bar z_m$ are tuples of variables such that $\bar x \subseteq \bar y_1 \cup \cdots \cup \bar y_n$ and $\bar z_1 \cup \cdots \cup \bar z_m \subseteq \bar y_1 \cup \cdots \cup \bar y_n$;\footnote{Slightly abusing notation, we also use set terminology for tuples of variables. Hence, for example, we use notation $\bar x \cup \bar y$ to define a set of variables consisting of the variables occurring in $\bar x$ or $\bar y$, and we use notation $\dom(f) = \bar x$ to indicate that the domain of $f$ is the set of variables occurring in the tuple $\bar x$.} and (iii) no list variable occurs in two of more of the sequences $\bar y_1$, $\ldots$, $\bar y_n$, $\bar z_1$, $\ldots$, $\bar z_m$. Notice that the second condition only allows rules with safe negation, while the third condition enforces joins of predicates in the body of a rule to be performed on boundary variables (cf.\ Section~\ref{sec:notation}), not on list variables. 

To define the semantics of a computation program, we start with the evaluation of relational atoms over a property graph $G = (N, E, \src, \tgt, \lab, \prop)$:
\begin{align*}
\sem{\node(x)}_G \ &= \ \{ f \mid \dom(f) = \{x\} \text{ and } f(x) \in N \}\\
\sem{\edge(x)}_G \ &= \ \{ f \mid \dom(f) = \{x\} \text{ and } f(x) \in E \}\\
\sem{\src(x, y)}_G  \ &= \{ f \mid \dom(f) = \{x,y\} \text{ and } \src(f(x)) = f(y) \}\\
\sem{\tgt(x, y)}_G  \ &= \{ f \mid \dom(f) = \{x,y\} \text{ and } \tgt(f(x)) = f(y) \}\\
\sem{\lab(x, y)}_G  \ &= \{ f \mid \dom(f) = \{x,y\} \text{ and } f(y) \in \lab(f(x)) \}\\
\sem{\prop(x, y, z)}_G  \ &= \{ f \mid \dom(f) = \{x,y,z\} \text{ and } \prop(f(x),f(y)) = f(z) \}
\end{align*}
Each rule \eqref{rule-dat} defines an intensional predicate $A$ as well as an intensional hash predicate $\#A$. Such a hash predicate $\#A$ stores an identifier for each tuple in $A$, which is used when generating a new property graph. Formally,
from now on we assume that $H$ is a fixed hash function, and use notation $H(A(\bar x))$ to indicate that $H$ is applied to the string representation of $A(\bar x)$. Then, 
given a property graph $G = (N, E, \src, \tgt, \lab, \prop)$, the evaluation of rule \eqref{rule-dat} produces the following sets of facts, assuming that $y$ is a fresh variable not occurring in rule~\eqref{rule-dat}:
\begin{align*}
&\sem{A(\bar x) \gets B_1(\bar y_1), \ldots, B_n(\bar y_n), \neg C_1(\bar z_1), \ldots, \neg C_m(\bar z_m) }_G \ = \\
&\hspace{20pt}\{ A(\bar a) \mid \exists g \text{ such that }
\dom(g) = \bar y_1 \cup \cdots \cup \bar y_n,\\ 
&\hspace{54pt}g|_{\bar y_i} \in \sem{B_i(\bar y_i)}_G \text{ for all } i \in \{1, \ldots, n\},\\
&\hspace{54pt}g|_{\bar z_j} \not\in \sem{C_j(\bar z_j)}_G \text{ for all } j \in \{1, \ldots, m\}, \text{ and } \bar a = g(\bar x)\},\\
&\sem{\#A(\bar x,y) \gets B_1(\bar y_1), \ldots, B_n(\bar y_n), \neg C_1(\bar z_1), \ldots, \neg C_m(\bar z_m)}_G \ = \\
&\hspace{20pt}\{ A(\bar a, b) \mid A(\bar a) \in \sem{A(\bar x) \gets B_1(\bar y_1), \ldots, B_n(\bar y_n), \neg C_1(\bar z_1), \ldots, \neg C_m(\bar z_m)}_G\\ 
&\hspace{63pt}\text{and } b = H(A(\bar a)) \}
\end{align*}
Finally, the evaluation of a computation program $\comp$ over a property graph $G$, denoted by $\sem{\comp}_G$, is defined as:
\begin{align*}
&\bigcup_{A(\bar x) \gets B_1(\bar y_1), \ldots, B_n(\bar y_n), \neg C_1(\bar z_1), \ldots, \neg C_m(\bar z_m) \in \comp} \\
&\hspace{50pt}\bigg(\sem{A(\bar x) \gets B_1(\bar y_1), \ldots, B_n(\bar y_n), \neg C_1(\bar z_1), \ldots, \neg C_m(\bar z_m)}_G
\ \cup\\
&\hspace{58pt}\sem{\#A(\bar x, y) \gets B_1(\bar y_1), \ldots, B_n(\bar y_n), \neg C_1(\bar z_1), \ldots, \neg C_m(\bar z_m)}_G\bigg)
\end{align*}
Notice that $\sem{\comp}_G$ is a set of facts, that is, a set of ground atoms of the form $A(\bar a)$ or $\#A(\bar a, b)$. We refer to such sets as \emph{relational instances}.


\subsubsection{Update Programs.} An \emph{update program} $\upd$ is a non-recursive Datalog program whose rules are of the form \eqref{rule-dat}, where
(i) $A(\bar x)$ is any of the relational atoms $\node(x)$, $\edge(x)$, $\src(x, y)$, $\tgt(x, y)$, $\lab(x, y)$, $\prop(x, y, z)$; (ii) each $B_i(\bar y_i)$ and each $C_j(\bar z_j)$ is either a relational atom or a relational hash atom; (iii) $\bar x$, $\bar y_1$, $\ldots$, $\bar y_n$, $\bar z_1$, $\ldots$, $\bar z_m$ are tuples of variables such that $\bar x \subseteq \bar y_1 \cup \cdots \cup \bar y_n$ and $\bar z_1 \cup \cdots \cup \bar z_m \subseteq \bar y_1 \cup \cdots \cup \bar y_n$; and (iv) no list variable occurs in two of more of the sequences $\bar y_1$, $\ldots$, $\bar y_n$, $\bar z_1$, $\ldots$, $\bar z_m$. 

Given a relational instance $I$, the evaluation of rule \eqref{rule-dat}, which is denoted by $\sem{A(\bar x) \ \gets \ B_1(\bar y_1), \ldots, B_n(\bar y_n), \lnot C_1(\bar z_1), \ldots, \lnot C_m(\bar z_m)}_I$, is defined as usual, considering that this is a standard Datalog rule evaluated over a relational instance. Then the evaluation of an update program $\upd$ over a relational instance $I$, denoted by $\sem{\upd}_I$, is defined as a property graph $G = (N, E, \src, \tgt, \lab, \prop)$ such that:
\begin{itemize}
\item $n \in N$ if and only there exists a rule $r \in \upd$ such that $\node(n) \in \sem{r}_I$.

\item $e \in E$ if and only if there exists a rule $r \in \upd$ such that $\edge(e) \in \sem{r}_I$.

\item $\src(e) = n$ if and only $n$ is the only element for which there exists a rule $r \in \upd$ such that $\src(e, n) \in \sem{r}_I$. If $e \not\in E$ or $n \not\in N$, then~$\sem{\upd}_I = \emptyset$.

\item $\tgt(e) = n$ if and only if $n$ is the only element for which there exists a rule $r \in \upd$ such that $\tgt(e, n) \in \sem{r}_I$. If $e \not\in E$ or $n \not\in N$, then~$\sem{\upd}_I = \emptyset$. Moreover, if $\src(e) = n$ but $\tgt(e)$ is not defined, or $\tgt(e) = n$ but $\src(e)$ is not defined, then $\sem{\upd}_I = \emptyset$.

\item $\ell \in \lab(o)$ if and only if there exists a rule $r \in \upd$ such that $\lab(o, \ell) \in \sem{r}_I$. If $o \not\in N \cup E$ or $\ell \not\in \Label$, then~$\sem{\upd}_I = \emptyset$.

\item $\prop(o, k) = v$ if and only if $v$ is the only element for which there exists a rule $r$ such that
$\prop(o, k, v) \in \sem{r}_I$. If $o \not\in N \cup E$, or $k \not\in \Key$, or $v \not\in \Value$, then 
$\sem{\upd}_I = \emptyset$.
\end{itemize}
With all this terminology in place, a $\hashd$ program (pronounced \emph{hash-Datalog}) is defined as a sequence $\Pi = (\comp^1, \upd^1, \ldots, \comp^k, \upd^k)$ such that $\comp^i$ is a computation program and $\upd^i$ is an update program, for every $i \in \{1, \ldots, k\}$. The evaluation of such a $\hashd$ program $\Pi$ over a property graph $G$ is defined by considering sequences $\{I_i\}_{i \in \{1, \ldots, k\}}$ and $\{G_i\}_{i \in \{1, \ldots, k\}}$ of relational instances and property graphs, respectively. More precisely, $I_1 = \sem{\comp^1}_{G}$, $G_1 = \sem{\upd^1}_{I_1}$, and for every $i \in \{2, \ldots, k\}$:
\begin{align*}
I_i \ &=\ \sem{\comp^i}_{G_{i-1}}\\
G_i \ &=\ \sem{\upd^i}_{I_i}
\end{align*}
Then we have that $\sem{\Pi}_G = G_k$.

\subsection{No expressiveness holes} We conclude this section by showing that the unusual expressiveness gaps described in the introduction do not arise in $\hashd$.

\begin{proposition}\label{prop:2}
$\hashd$ can express every query in NLOGSPACE.
\end{proposition}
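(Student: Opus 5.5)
We will reduce to Immerman's descriptive characterization of NLOGSPACE: on ordered finite structures, NLOGSPACE coincides with FO(TC). Moreover, by the normal form theorem \cite{immerman-book}, every such query can be written as a single positive application of transitive closure to a first-order formula, $\mathrm{TC}_{\bar u,\bar v}[\varphi(\bar u,\bar v)](\bar s,\bar t)$, possibly preceded by first-order preprocessing. Strictly, the normal form uses deterministic constants $\bar 0,\bar{\max}$ as $\bar s,\bar t$.

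So the plan is to show that $\hashd$ captures three ingredients: (a) first-order queries over the relational encoding of a property graph; (b) construction, from a first-order definable $2k$-ary relation $R$ on the domain, of a new property graph whose nodes are $k$-tuples and whose edges are the pairs in $R$; (c) reachability on that graph, by an RPQV, with the answer read back into relations via the hash predicates. As usual in capturing results we assume the input is ordered, so we consider queries over graphs equipped with a linear order available, e.g., via the $<$ comparisons on property values or as part of the encoding. Otherwise the statement should be read as ``expresses every NLOGSPACE query on ordered graphs,'' or we invoke the result that TC captures NL under order.

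For step (a), note that a computation program is non-recursive Datalog with safe negation over the atoms $\node,\edge,\src,\tgt,\lab,\prop$. This is exactly relational algebra, hence first-order logic, with active-domain semantics. Each subformula becomes an IDB predicate, unions become multiple rules, and negation is guarded by an active-domain predicate $\mathit{Dom}(x)$, defined by one rule per position of each EDB atom. Existential quantification is handled by projection.

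For step (b), given an IDB $R(\bar u,\bar v)$ with $|\bar u|=|\bar v|=k$, the update program proceeds as follows. We define $T(\bar u)\gets \mathit{Dom}(u_1),\dots,\mathit{Dom}(u_k)$, set $\node(h)\gets \#T(\bar u,h)$, and set $\edge(h)\gets\#R(\bar u,\bar v,h)$. The endpoints are given by $\src(h,a)\gets \#R(\bar u,\bar v,h),\#T(\bar u,a)$, and analogously for $\tgt$. This is exactly the pattern of the dual-graph and path-to-edge examples. To remember which tuple a new node encodes, we also store a property per coordinate: $\prop(a,``c_i",u_i)\gets \#T(\bar u,a)$, for fixed keys $c_1,\dots,c_k$. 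Here we use that node IDs from the input may be reused as values, or, alternatively, we carry a separate label and pass the mapping via the next stage.

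For step (c), the next computation program applies the RPQV $(x)\,((\_)[\_](\_))^*\,(y)$ and recovers the tuples by $\mathit{Reach}(\bar s,\bar t)\gets (x)((\_)[\_](\_))^*(y),\prop(x,``c_1",s_1),\dots,\prop(y,``c_k",t_k)$. After this step, remaining first-order postprocessing is again step (a). Since we have a normal form with a single TC, one round suffices. For arbitrary nesting, which is unnecessary by the normal form but harmless, we would iterate stages, carrying the original graph along by copying it into the new graph under a distinguished label.

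The main obstacle is the bookkeeping across stages. Each update program produces a fresh graph, and computation programs only see the current graph, not earlier relations. So everything needed later, namely the original graph encoding, the order, and the tuple-to-node map, must be serialized into the new property graph as nodes, edges, labels, and properties. We will handle this with a generic ``copy'' update program that reproduces $G$ as a labeled subgraph, together with an auxiliary node per domain element. We must also check that no uniqueness condition on $\src$, $\tgt$, or $\prop$ in $\sem{\upd}_I$ is violated, since otherwise the result is $\emptyset$. Hash IDs guarantee this because each edge ID determines its tuple. A secondary subtlety is encoding the output of a Boolean or $r$-ary query as a graph, which we do by emitting answer nodes carrying the answer tuple as properties.
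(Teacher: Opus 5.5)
Your proposal is correct and follows the same route as the paper: non-recursive Datalog with negation supplies the first-order part, hash predicates turn a first-order definable relation on $k$-tuples into a new graph, and an RPQV checks reachability on it. Your use of Immerman's normal form is the same fact as the first-order completeness of reachability for NLOGSPACE that the paper invokes, and your extra care about the ordering assumption, about serializing information into the new graph (each computation program only sees the previous graph), and about the well-definedness conditions on update programs covers points the paper's three-line sketch leaves implicit. For Boolean queries you can avoid the question of whether node IDs may serve as property values by labelling the start and end tuple-nodes in the first stage.
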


\begin{proof}
   Note that nonrecursive Datalog with negation can express every first-order query, and every first-order query $\varphi(\bar x, \bar y)$ with $|\bar x|=|\bar y|=m$ can be turned into a graph whose nodes are given by $m$-tuples with edges from $\bar a$ to $\bar b$ iff $\varphi(\bar a, \bar b)$ holds. Since every NLOGSPACE problem is first-order reducible to graph reachability, to express it in $\hashd$ we then simply use first-order power of datalog to create a graph as above and then use an RPQV on it to check for reachability. 
\end{proof}


\subsection{Extension: Aggregation}\label{app:aggregation}

It is easy to extend the formal semantics of $\hashd$ with aggregation functions. The most straightforward extension is aggregation on lists, allowing atomic statements such as length$(z) = x$ for a list variable $z$. In this case, the variable $z$ should be guarded, i.e., provided to us by an RPQV. Other standard list aggregates available in GQL and SQL/PGQ can be added analogously. Using this addition, it becomes possible to write more interesting transformations. The following example (similar to the one in the body of the paper) transforms every path from Megan to Mike in the original graph into a single edge and adds a property ``length'' to it, in which it puts the length of the respective path.

$
\begin{array}{rl}
\mathit{Megan}(x) & \gets (x)\; (y)\langle y.\mathit{owner} = ``\text{Megan}"\rangle \; (x)\\
\mathit{Mike}(x) & \gets (x)\;(y)\langle y.\mathit{owner} = ``\text{Mike}"\rangle\; (x)\\
\mathit{Path}(x, y, z) & \gets 
(x)\ (\_)([z](\_)[z'] \langle z.\mathit{value} < z'.\mathit{value} \rangle )^*[z](\_)\  (y)
\end{array}
$

\medskip
$
\begin{array}{rl}
\node(x) & \gets  \#\mathit{Megan}(y, x) \lor \#\mathit{Mike}(y, x)\\
\edge(p) & \gets  \#\mathit{Path}(x, y, z, p)\\
\src(p, y) & \gets  \edge(p), \node(y), \#\mathit{Path}(z_1, z_2, z_3, p), \#\mathit{Megan}(z_1, y)\\
\tgt(p, y) & \gets  \edge(p), \node(y), \#\mathit{Path}(z1, z2, z3, p), \#\mathit{Mike}(z_3, y)\\
\prop(x, y, z) & \gets \edge(x), y = ``\text{length}", \#B(z_1, z_2, z_3, x), z = \text{length}(z_2)
\end{array}
$

%% file: gql.tex
We describe the key ingredients of the proposal that will be communicated to the ISO working groups for 
GQL and SQL.
We propose two separate additions that are
backward compatible: we thus do not propose to change the existing
behavior, \emph{which is a must for the ISO committee}. The first change concerns the behavior of \emph{patterns}, by
incorporating different behaviors of RPQs and different treatment of
variables that can be bound to single elements and/or lists.
The second change incorporates some of the features provided by
$\hashd$.

\subsection{Additional Pattern Flexibility}

In terms of the language design, we start with the following basic
principles:
\begin{enumerate}
\item more symmetry: paths need not start and end with a node;
\item all variables in a pattern can be list
    variables, except \emph{boundary variables};
\item everything written in the current GQL/SQL syntax should work as
      before.
\end{enumerate}
The current standards desire paths to be node-to-node. They ensure this with an {\em automatic node insertion} and {\em node collapse} policy in
patterns. 
For instance, a
subpattern \gqlcode{-[:a]-> -[:b]->} 
is rewritten to \gqlcode{()-[:a]->()-[:b]->()} that inserts an anonymous node between edges, and
adds end-nodes so that paths that match the subpattern start and finish with nodes. Further, 
if we have two 
consecutive nodes in a pattern, they collapse into one.
For example, in a 
pattern \gqlcode{(x:a) (y:b)}, the two nodes collapse; that is, 
both \gqlcode{x} and \gqlcode{y} must be bound to the same node, that
must have both labels \gqlcode{a} and \gqlcode{b}, just as in RPQVs.

The patterns that match paths that not necessarily start and end in a
node can be introduced in GQL by extending its \emph{matching modes}. We propose a mode 
\begin{gql*}
MATCH COLLAPSE <pattern>
\end{gql*}
in which a pattern must be fully specified, that is, no graph
elements are automatically inserted. Introducing this new mode makes our proposal backward compatible.
In the new mode, the pattern
\begin{gql*}
MATCH COLLAPSE (:Start)  
        ( (-[x]->()-[y]-> WHERE x.prop < y.prop)+ | -> )  (:End)
\end{gql*}
will match paths for which the property \gqlcode{prop} of edges increases on
paths from \verb|Start| to \verb|End| nodes. Notice that we explicitly write \gqlcode{-[x]->()-[y]->} instead of the usual \gqlcode{-[x]-> -[y]->}, since in this mode no insertions of
anonymous patterns happen anymore. In the \kw{COLLAPSE} mode,    \gqlcode{x} of the previous iteration and \gqlcode{y} of the next iteration will be matched to the same edge, as in RPQVs. 

List variables in RPQVs can also be handled by a simple syntactic extension: a new keyword
(say, \kw{ACCUMULATE}) before a path pattern expression indicates that all its internal variables should be treated as list variables (aka group variables).

\subsection{Achieving Compositionality in GQL}

Our proposal for language enhancement is based on GQL's idea ---
borrowed from Cypher --- of linear of pipelined evaluation, see
\cite{sigmod22,icdt23} and a theoretical model in
\cite{vldb25}. We outline its key
ideas next. A GQL query is a sequence of \emph{clauses}, and the mechanism of passing
information between them is called a \emph{binding
table}. That is, a clause $C$ takes two inputs: a graph and a table. The graph, however, \emph{is always the input graph $G$} and it is only the table that evolves. Thus, a GQL sequence of clauses $C_1 \  C_2 \ \cdots \ C_n$ produces 
$$ C_n\Big(\ G, \ \cdots\  C_2\big(\ G, C_1(G,T_{()})\ \big) \cdots
\Big)\;,$$
where $T_{()}$ is the table that contains a single empty tuple. Notice that the end-result is a \emph{table}.

In this framework, functionalities of $\hashd$ can be incorporated by
new clauses that {\em modify} the graph $G$ itself. 
That is, each clause now maps a pair $(G,T)$ of a graph and a
table into a new pair $(G',T')$, with the semantics of a query being
the composition of the clauses viewed as such functions. We shall add
two new clauses: \kw{CREATE NODE} and \kw{CREATE EDGE}, and
the ability to refer to values that produced new nodes and edges, as
$\#$-predicates do in $\hashd$.

Values used for generating new nodes and 
edges can come from the binding table, or from a
match, leading to  the following syntax proposal for adding
nodes:
\begin{gql*}
CREATE NODE (:<label>,...,:<label>,
             prop_name:<expr>,...,prop_name:<expr> )
FROM MATCH <pattern> | ROW
WHERE <condition>
\end{gql*}
If \kw{FROM} is followed by  \kw{MATCH}, the pattern
matching statement is performed, and every tuple in the resulting
match that satisfies \gqlcode{condition} in \kw{WHERE} gives rise to a new node. In the case of \kw{ROW}, 
every row in the binding that satisfies \gqlcode{condition} generates a new node. A node can have zero or more
labels, and zero or more properties, given by expressions that can
refer to values from either the binding table row or the result of
pattern matching. 
For example, the following clause turns every edge with
label \gqlcode{transfer} (whose \gqlcode{ts} property is in 2026 or later) into a node, keeping its \gqlcode{ts} property but renaming it to \gqlcode{timestamp}: 
\begin{gql*}
CREATE NODE (:transfer_node, timestamp:e.ts) 
FROM MATCH -[e:transfer]-> WHERE e.ts >= '2026-01-01'
\end{gql*}

\smallskip
\noindent
Next,  we need the ability to refer to properties of
graph elements that gave rise to new nodes or edges. Above, a new \gqlcode{transfer} node was created from a transfer edge
\gqlcode{e}. We propose a syntactic device \gqlcode{@e} as a way of
referencing this edge. For example, if transfer edges have a
property \gqlcode{amount} and we want to find transfer nodes that
come from edges with the amount at least 100, we would write
\begin{gql*}
MATCH (n:transfer_node) WHERE n@e.amount > 100
\end{gql*}
It is feasible to maintain this connection in the same way as it is feasible to maintain $\#$-predicates in $\hashd$, see Remark \ref{remark:hash-tables}.



Finally, we explain how edges can be added. The general pattern is the
same as for nodes but with a few differences:
\begin{gql*}
CREATE EDGE (<expr>)
 -[:<label>,...,:<label>,prop_name:<expr>,...,prop_name:<expr>]-> 
            (<expr>)
FROM MATCH <pattern> | ROW
WHERE <condition>
\end{gql*}
An edge is created for each match of a pattern or each row
in the binding table satisfying  \gqlcode{condition}, and can have zero or more
labels or properties. It needs to have its source and destination, which
are given by two expressions. If these expressions are {\em not} evaluated to  nodes, then a fresh node would be created in the
place of a source/destination. We need one additional
syntactic device: the ability to reference the source and
destination of an existing edge (see the example below). This is
needed since in the GQL standard, this check is performed by
conditions \kw{IS SOURCE | DESTINATION OF} rather than functions. 

As an example, we connect two transfer nodes if, as edges, the
destination of one was the source of the other. We illustrate this using the \kw{FROM ROW} mode: 
\begin{gql*}
MATCH (n1:transfer_node), (n2:transfer_node)
CREATE EDGE (n1) -[:new_edge]-> (n2) FROM ROW
WHERE n1@e.destination = n2@e.source
\end{gql*}
Once a clause creates nodes or edges, these are available to
subsequent clauses. For example, the \emph{increasing values in edges} query can now be expressed {\em without} using the \kw{COLLAPSE} mode as 
\begin{gql*}
MATCH (x) ( (n1:transfer_node) -[:new_edge]->(n2.transfer_node) 
         WHERE n1.ts < n2.ts )* (y)
RETURN x@e.source, y@e.destination         \end{gql*}
At the end of the query, the newly created nodes and edges can disappear by default or some of them can be made
persistent. No additional 
syntax is necessary, as one can just adapt the
existing \kw{INSERT} clause for this purpose.

Finally, note that the two syntactic devices proposed here {\em independently} fill the previously known expressivity gaps of GQL. This gives more flexibility in terms of advocating for GQL extensions, as the adoption of either would contribute to increased expressiveness.